\documentclass[11pt]{article}

\usepackage{amsmath,amssymb,amsthm}
\usepackage{graphicx}
\usepackage{bm}
\usepackage{hyperref}\hypersetup{hidelinks}

\title{\bf
Relative Entropy Revivals Caused by Memory-Induced Loss of Divisibility in Classical Non-Markovian Dynamics
}

\author{Koichi Nakagawa\\
Hoshi University, Tokyo, Japan}
\date{}

\theoremstyle{plain}
\newtheorem{proposition}{Proposition}

\newtheorem{criterion}{Heuristic criterion}

\begin{document}
\maketitle

\begin{abstract}
Relative-entropy revivals and negative entropy-production rates are established
signatures of memory effects in non-Markovian dynamics.  The present paper
addresses a more specific question: what minimal dynamical mechanism produces
entropy overshoot in classical stochastic relaxation?  We define entropy
overshoot as a transient increase of the Kullback--Leibler divergence from a
stationary distribution during a relaxation process that ultimately returns to
equilibrium.

Starting from finite-state generalized master equations with memory kernels, we
analyze the effective time-local generator governing the observed dynamics.  The
standard monotonic decay of relative entropy is recalled as a reference property
of reversible Markov dynamics.  We then show that memory can destroy stochastic
divisibility by producing transiently negative effective rates.  Once the
evolution ceases to be divisible into intermediate Markov steps, the usual
relative-entropy contraction argument no longer applies, and entropy overshoot
can occur.

The contribution of the paper is therefore not the introduction of another
witness of non-Markovianity, but the identification of a minimal structural
mechanism and its phase structure.  The mechanism is illustrated with two- and
three-state stochastic models, where the competition between the intrinsic
relaxation time and the memory time separates monotonic relaxation from
overshoot.  Numerical phase diagrams show that the overshoot region expands when
memory persists on the relaxation timescale.

These results clarify the relation between classical information backflow,
negative entropy-rate behavior, and memory-induced loss of divisibility.  They
also distinguish the present mechanism-based interpretation from previous work
that used relative-entropy revivals or entropy-production rates primarily as
indicators of non-Markovianity.
\end{abstract}

\noindent\textbf{Keywords:} non-Markovian dynamics; stochastic relaxation; entropy overshoot; Kullback--Leibler divergence; memory kernels; divisibility; information backflow.

\section{Introduction}

Relaxation toward equilibrium is one of the central themes of nonequilibrium
statistical physics.  In a finite-state Markov process satisfying detailed
balance, the probability distribution approaches its stationary distribution in
a monotonic way when measured by the Kullback--Leibler (KL) divergence
\cite{CoverThomas2006}.  The KL divergence then plays the role of a Lyapunov
function and quantifies the decrease of information distance from equilibrium
\cite{Qian2001,Seifert2012}.

This familiar picture can fail when memory effects are present. Non-Markovian
dynamics arise naturally in projection-operator approaches introduced by
Nakajima and Zwanzig \cite{Nakajima1958,Zwanzig2001}, as well as in generalized
Langevin equations and anomalous transport processes \cite{Kubo1966,MetzlerKlafter2000}.
They are also closely related to continuous-time random walks and semi-Markov
constructions \cite{MontrollWeiss1965,VanKampen2007}.  In such systems the
future evolution is not determined solely by the instantaneous probability
distribution, but also by the past history of the process.

Several indicators of non-Markovianity have been developed in quantum and
classical settings.  These include trace-distance revivals, relative-entropy
revivals, entropy-production rates, and divisibility-based criteria
\cite{Breuer2009,Rivas2014,Vacchini2012,BreuerPetruccione2002}.  In particular,
the relationship between classical and quantum notions of non-Markovianity was
analyzed in Ref.~\cite{Vacchini2012}, while negative entropy-production rates
and their relation to memory effects have been discussed in Refs.~\cite{Bhattacharya2017,Popovic2018,Strasberg2019}.  Recent work has also explored
experimental and applied contexts in which transiently negative entropy
production serves as a signature of memory \cite{Magnetization2026}.

The purpose of the present paper is not to introduce another witness of
non-Markovianity.  Rather, it is to isolate a minimal mechanism behind a
specific transient behavior: a temporary increase of the KL divergence during
non-Markovian relaxation.  We call this phenomenon entropy overshoot.  Although
relative-entropy revivals and negative entropy-production rates have been used
before as indicators of memory, the mechanism that produces such an overshoot
can be obscured in general models.  Our aim is therefore to make the mechanism
as transparent as possible in finite-state stochastic systems.

The main message of this work is structural.  We use the standard Markovian
monotonicity of relative entropy as a reference point and ask how it can fail.
We show that memory kernels can generate an effective time-local generator with
transiently negative off-diagonal elements.  These negative effective rates do
not make the probability distribution ill-defined.  Instead, they signal a loss
of stochastic divisibility: the observed dynamics can no longer be decomposed
into intermediate Markov steps.  Once this divisibility is lost, the usual
relative-entropy contraction argument does not apply, and the KL divergence can
increase for suitable initial conditions.

This viewpoint connects entropy overshoot to information backflow, but it also
separates the present contribution from earlier indicator-based approaches.  The
central object here is not the revival itself; it is the memory-induced loss of
divisibility that generates the revival.  Minimal two- and three-state models
are used to display this mechanism explicitly and to construct phase diagrams
separating monotonic relaxation from overshoot.

The paper is organized as follows.  Section~2 introduces generalized master
equations with memory kernels.  Section~3 discusses stationary distributions,
and Sec.~4 recalls the relative-entropy measure used throughout the paper.
Sections~5--8 formulate entropy overshoot, Markovian embedding, and effective
time-local generators.  Section~9 states the standard monotonicity property of
relative entropy under Markovian dynamics, while Sec.~10 explains the mechanism
by which non-divisibility can break this monotonicity.  Section~11 gives a
simple timescale criterion for overshoot.  Sections~12--16 present minimal
models, numerical methods, and divisibility-loss phase diagrams.  Section~17
discusses the relation to previous entropy-production and information-backflow
approaches, and Sec.~18 summarizes the conclusions.

\section{Generalized master equations}

Consider a finite-state stochastic system with state space
\begin{equation}
  X=\{1,2,\ldots,N\}.
\end{equation}
Let $P_i(t)$ be the probability of state $i$ at time $t$, and let
$P(t)=(P_1(t),\ldots,P_N(t))^T$.  A broad class of non-Markovian stochastic
dynamics can be described by a generalized master equation
\begin{equation}
  \frac{d}{dt}P_i(t)=\sum_j\int_0^t K_{ij}(t-\tau)P_j(\tau)\,d\tau,
  \label{eq:gme_component}
\end{equation}
or, in vector form,
\begin{equation}
  \frac{d}{dt}P(t)=\int_0^t K(t-\tau)P(\tau)\,d\tau.
  \label{eq:gme}
\end{equation}
Here $K(t)$ is a memory kernel.  The Markovian master equation is recovered when
the kernel becomes local in time.

A simple and analytically tractable example is the exponential kernel
\begin{equation}
  K(t)=\gamma e^{-\gamma t}R,
  \label{eq:exp_kernel}
\end{equation}
where $R$ is a Markov generator and $\gamma$ controls the memory time.  Large
$\gamma$ corresponds to short memory, while small $\gamma$ corresponds to long
memory.  In the formal short-memory limit the dynamics approaches the Markovian
relaxation generated by $R$.

The exponential kernel is not the most general possible memory kernel, but it is
sufficient to display the basic mechanism studied in this paper.  It also admits
a convenient Markovian embedding, which makes the relation between memory and
loss of divisibility particularly transparent.

Non-Markovian stochastic dynamics have been studied extensively in classical statistical physics 
\cite{VanKampen2007,HanggiJung1995}. In particular, generalized master equations with memory kernels 
provide a natural framework to describe systems with temporal correlations.

\section{Stationary distribution}

We assume that the Markov generator $R$ admits a stationary distribution $\pi$
satisfying
\begin{equation}
  R\pi=0,
  \label{eq:stationary}
\end{equation}
with $\pi_i>0$ and $\sum_i\pi_i=1$.  For an irreducible finite-state Markov
chain this stationary distribution is unique.

For the exponential kernel~\eqref{eq:exp_kernel}, the same stationary
distribution is also stationary for the non-Markovian dynamics.  Indeed, if
$P(t)=\pi$ for all $t$, then
\begin{equation}
  \int_0^t \gamma e^{-\gamma(t-\tau)}R\pi\,d\tau=0.
\end{equation}
Thus the memory kernel changes the transient relaxation, but not the final
stationary state.  This makes $\pi$ a natural reference state for the relative
entropy.

\section{Relative entropy and entropy overshoot}

We measure the distance from the stationary state by the KL divergence \cite{CoverThomas2006}
\begin{equation}
  F(t)=D_{\rm KL}(P(t)\|\pi)
      =\sum_i P_i(t)\ln\frac{P_i(t)}{\pi_i}.
  \label{eq:KL}
\end{equation}
For Markov processes satisfying detailed balance, $F(t)$ is monotonically
non-increasing \cite{Qian2001,Seifert2012}.  This monotonicity expresses the irreversible loss of information
about the initial state during relaxation.

We define entropy overshoot as the occurrence of
\begin{equation}
  F(t)>F(0)
  \label{eq:overshoot_definition}
\end{equation}
for some finite time $t>0$, even though $F(t)$ eventually decreases to zero.
This definition is intentionally operational: it identifies a transient
increase of information distance from equilibrium.  The mechanism behind this
increase is the main subject of the paper.

It is useful to distinguish entropy overshoot from ordinary slow relaxation.
In slow but monotonic relaxation, $F(t)$ decreases at all times.  In entropy
overshoot, the initial trend can be reversed by memory-induced feedback before
asymptotic relaxation takes over.

\section{Embedded Markov representation}

For the exponential kernel one can rewrite the non-Markovian equation as a
Markovian system in an enlarged state space.  Introduce an auxiliary vector
\begin{equation}
  Y(t)=\int_0^t \gamma e^{-\gamma(t-\tau)}P(\tau)\,d\tau.
\end{equation}
Then
\begin{align}
  \dot P(t)&=RY(t),\label{eq:embed_P}\\
  \dot Y(t)&=\gamma P(t)-\gamma Y(t).\label{eq:embed_Y}
\end{align}
Equations~\eqref{eq:embed_P}--\eqref{eq:embed_Y} form a time-local system.  The
original probability vector $P(t)$ is obtained by projecting the enlarged state
onto the observable component.  The embedding shows explicitly that the
observable process can be non-Markovian even when the enlarged dynamics is
local in time.

This representation is a useful diagnostic tool: memory corresponds to degrees
of freedom that are not included in the observed probability vector.  Delayed
feedback from these hidden variables can temporarily reverse the direction of
relaxation in the observed space.

\section{Effective time-local generator}

Whenever the map from $P(0)$ to $P(t)$ is invertible, the observable dynamics can
be written in time-local form
\begin{equation}
  \frac{d}{dt}P(t)=R_{\rm eff}(t)P(t),
  \label{eq:time_local}
\end{equation}
where $R_{\rm eff}(t)$ is an effective generator.  If $R_{\rm eff}(t)$ is a valid
Markov generator at every time, it satisfies
\begin{align}
  (R_{\rm eff})_{ij}(t)&\geq 0\qquad (i\ne j),\label{eq:positive_rates}\\
  \sum_i (R_{\rm eff})_{ij}(t)&=0.\label{eq:column_sum}
\end{align}
In that case the evolution is divisible: the map from any time $s$ to any later
time $t$ is again a stochastic map.

Non-Markovian memory can produce an effective generator that violates
Eq.~\eqref{eq:positive_rates} during finite time intervals.  Such negative
effective rates do not mean that probabilities become negative.  Rather, they
mean that the reduced observable dynamics cannot be decomposed into ordinary
Markovian steps at those times.

\section{Evolution of the KL divergence}

Differentiating Eq.~\eqref{eq:KL} gives
\begin{equation}
  \frac{dF}{dt}=\sum_i \dot P_i(t)\ln\frac{P_i(t)}{\pi_i},
  \label{eq:dF_basic}
\end{equation}
where probability conservation has been used.  With the time-local form
\eqref{eq:time_local}, this becomes
\begin{equation}
  \frac{dF}{dt}=\sum_{i,j}(R_{\rm eff})_{ij}(t)P_j(t)
  \ln\frac{P_i(t)}{\pi_i}.
  \label{eq:dF_generator}
\end{equation}
This formula connects the sign of the entropy production rate to the structure
of the effective generator.

For a genuine Markov generator satisfying detailed balance, the right-hand side
is non-positive.  If some off-diagonal rates become negative, the usual proof of
monotonicity no longer applies.  This observation is the starting point for the
mechanism discussed below.

\section{Markovian monotonicity as a reference fact}

The following proposition is included to fix notation and to make clear which
part of the argument is standard.

\begin{proposition}[Relative entropy monotonicity for reversible Markov dynamics]
Let $R$ be an irreducible Markov generator with stationary distribution $\pi$.
If $R$ satisfies detailed balance,
\begin{equation}
  R_{ij}\pi_j=R_{ji}\pi_i,
\end{equation}
then the KL divergence $D_{\rm KL}(P(t)\|\pi)$ along the Markovian evolution
$\dot P=RP$ is monotonically non-increasing.
\end{proposition}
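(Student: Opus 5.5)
The plan is to differentiate $F(t)=D_{\rm KL}(P(t)\|\pi)$ along $\dot P=RP$ and to show that the resulting expression is a sum of manifestly non-positive terms. I start from Eq.~\eqref{eq:dF_generator} with $R_{\rm eff}=R$:
\begin{equation*}
  \frac{dF}{dt}=\sum_{i,j}R_{ij}P_j\ln\frac{P_i}{\pi_i}.
\end{equation*}
This uses only probability conservation, $\sum_i\dot P_i=0$, which follows from the column-sum condition \eqref{eq:column_sum}. I also need $P_i(t)>0$ for $t>0$ so that the logarithms are defined. For an irreducible generator this holds because $e^{tR}$ has strictly positive entries for $t>0$. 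The case $t=0$ with zero components can be handled by continuity of $F$ at $t=0$.

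Next I introduce the ratios $x_i=P_i/\pi_i$. Using $\sum_i R_{ij}=0$, I write the diagonal term as $R_{jj}=-\sum_{i\neq j}R_{ij}$. This gives
\begin{equation*}
  \frac{dF}{dt}=\sum_{i\neq j}R_{ij}P_j\bigl(\ln x_i-\ln x_j\bigr)
  =\sum_{i\neq j}R_{ij}\pi_j x_j\bigl(\ln x_i-\ln x_j\bigr).
\end{equation*}
Now I invoke detailed balance: the weights $w_{ij}=R_{ij}\pi_j=R_{ji}\pi_i$ are symmetric and nonnegative for $i\neq j$. I symmetrize by swapping $i\leftrightarrow j$ and averaging:
\begin{equation*}
  \frac{dF}{dt}=-\frac12\sum_{i\neq j}w_{ij}\,(x_i-x_j)(\ln x_i-\ln x_j).
\end{equation*}
Because $\ln$ is increasing, each product $(x_i-x_j)(\ln x_i-\ln x_j)$ is nonnegative. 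The weights are nonnegative as well, so $dF/dt\le 0$. Integrating in $t$ then gives monotonic non-increase of $F$.

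I expect the symmetrization step to be the main point requiring care. It is exactly where detailed balance enters, and it is where the sign condition \eqref{eq:positive_rates} is used, through $w_{ij}\ge0$. This makes transparent why negative effective rates break the argument later in the paper.

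As a fallback, I could avoid detailed balance entirely by using the log-sum inequality $\ln y\le y-1$. This gives $dF/dt\le\sum_{i,j}R_{ij}\pi_j x_j\bigl(x_i/x_j-1\bigr)$ over $i\neq j$, which equals $\sum_i x_i(R\pi)_i=0$. The same conclusion then holds for any generator with stationary $\pi$. Either route only requires justifying the positivity of $P(t)$ and the handling of the boundary at $t=0$, both of which are routine.
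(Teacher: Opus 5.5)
Your main argument is correct and is the paper's proof: use $\sum_i R_{ij}=0$ to write $dF/dt=\sum_{i\neq j}w_{ij}x_j(\ln x_i-\ln x_j)$, then symmetrize with the detailed-balance weights $w_{ij}=R_{ij}\pi_j=R_{ji}\pi_i\ge 0$ to get $-\frac12\sum_{i\neq j} w_{ij}(x_i-x_j)(\ln x_i-\ln x_j)\le 0$. You also fill in points the paper leaves implicit, namely positivity of $P(t)$ for $t>0$ and continuity of $F$ at $t=0$, and your fallback via $\ln y\le y-1$ is also correct and shows that detailed balance is not actually needed for monotonicity.
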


\begin{proof}
This is the standard entropy-dissipation property of reversible Markov chains.
Using detailed balance, one can write
\begin{equation}
  \frac{dF}{dt}
  =-\frac{1}{2}\sum_{i,j}R_{ij}\pi_j
  \left(\frac{P_j}{\pi_j}-\frac{P_i}{\pi_i}\right)
  \left[\ln\frac{P_j}{\pi_j}-\ln\frac{P_i}{\pi_i}\right].
\end{equation}
Each term is non-positive because $(x-y)(\ln x-\ln y)\geq 0$ for positive
$x,y$.  Hence $dF/dt\leq0$.
\end{proof}

The proposition is not presented as a new theorem.  Its role is to provide the
baseline against which non-Markovian overshoot is defined.  Entropy overshoot is
precisely a transient violation of this Markovian monotonicity property.

\section{Memory-induced breakdown of monotonicity}

We now turn to the non-Markovian case.  The essential point is that the effective
generator $R_{\rm eff}(t)$ of the observed process may fail to be a Markov
generator even when the full embedded dynamics is well behaved.

\begin{proposition}[Divisibility-loss mechanism for entropy overshoot]
Suppose that the effective time-local generator $R_{\rm eff}(t)$ develops a
negative off-diagonal element during a finite time interval.  Then the observed
evolution is not divisible into ordinary Markovian stochastic maps throughout
that interval.  Consequently, the Markovian relative-entropy contraction
argument cannot be applied, and transient intervals with $dF/dt>0$ may occur for
suitable initial conditions.
\end{proposition}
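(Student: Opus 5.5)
The proposition has two parts, and I would prove them separately. First, a negative off-diagonal rate rules out divisibility. Second, without divisibility the contraction argument of the previous proposition gives no sign for $dF/dt$, and an explicit initial condition makes $dF/dt>0$.

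\emph{Step 1 (non-divisibility).} Let $\Lambda(t,s)$ denote the propagator from time $s$ to time $t$, so that $\partial_t\Lambda(t,s)=R_{\rm eff}(t)\Lambda(t,s)$ and $\Lambda(s,s)=I$. This propagator exists wherever the map $P(0)\mapsto P(t)$ is invertible, as assumed in Sec.~7. Suppose $(R_{\rm eff})_{ij}(t_0)<0$ for some $i\neq j$ and some $t_0$ inside the interval. For small $h>0$,
\begin{equation}
  \Lambda(t_0+h,t_0)_{ij}=h\,(R_{\rm eff})_{ij}(t_0)+o(h)<0 ,
\end{equation}
provided $R_{\rm eff}$ is continuous at $t_0$. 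I would check continuity on the exponential-kernel embedding, where $R_{\rm eff}(t)$ is built from smooth matrix exponentials. So $\Lambda(t_0+h,t_0)$ is not a stochastic matrix, and the evolution cannot be factorized into stochastic maps $\Lambda(t,s)\Lambda(s,0)$ at intermediate times in that interval. This is exactly the negation of the divisibility property stated after Eq.~\eqref{eq:column_sum}.

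\emph{Step 2 (the contraction argument fails).} I would point out that the previous proposition, and the data-processing inequality $D_{\rm KL}(\Lambda P\|\Lambda\pi)\le D_{\rm KL}(P\|\pi)$ behind it, both need a stochastic map. In Eq.~\eqref{eq:dF_generator}, I would use the column-sum condition to rewrite the right-hand side as
\begin{equation}
  \frac{dF}{dt}=\sum_{i\neq j}(R_{\rm eff})_{ij}P_j\Bigl[\ln\tfrac{P_i}{\pi_i}-\ln\tfrac{P_j}{\pi_j}\Bigr]
  +\text{(terms involving }R_{\rm eff}\pi\text{)} .
\end{equation}
The sign of each off-diagonal term now depends on the sign of $(R_{\rm eff})_{ij}$, so no termwise bound like the one in the previous proposition is available.

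\emph{Step 3 (existence of $dF/dt>0$).} This is the main obstacle, because "may occur" has to be backed by an actual construction. At $t_0$, I would choose $P(t_0)$ with $P_j(t_0)$ close to $1$ and $P_i(t_0)/\pi_i$ much larger than the other ratios $P_k/\pi_k$. Then the term with the negative rate $(R_{\rm eff})_{ij}(t_0)P_j\ln(P_i/\pi_i)$ is positive and large, while the remaining terms stay bounded. The difficulty is that in the non-Markovian setting $P(t_0)$ cannot be prescribed freely: it must be reachable from some $P(0)$, and $R_{\rm eff}$ itself depends on the history. So I would first try to use invertibility of $\Lambda(t_0,0)$. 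For $P(t_0)$ near $\pi$, $\Lambda(t_0,0)^{-1}P(t_0)$ is still a valid distribution, so I would perturb $P(t_0)=\pi+\epsilon v$ and choose the direction $v$ so that the first-order term in $\epsilon$ is positive. Failing that, I would fall back on the explicit two-state model with the exponential kernel. There $F(t)$ can be computed in closed form, and exhibiting one parameter set with $dF/dt>0$ suffices, since the claim is only existential.
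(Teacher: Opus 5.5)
Your Steps 1 and 2 are the paper's own argument made precise. The paper says only in words that a negative rate violates the generator condition for an intermediate step. Your expansion $\Lambda(t_0+h,t_0)_{ij}=h(R_{\rm eff})_{ij}(t_0)+o(h)$, together with the uniqueness of $\Lambda(t,s)=\Lambda(t,0)\Lambda(s,0)^{-1}$ under invertibility, is the right way to state it. The ``terms involving $R_{\rm eff}\pi$'' in Step 2 are not actually there: the rewriting uses only the column sums, and in any case $R_{\rm eff}(t)\pi=0$ follows by differentiating $\Lambda(t,0)\pi=\pi$.

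The gap is in Step 3, where both of your primary routes fail. In the first, the sign is reversed. With $(R_{\rm eff})_{ij}<0$, the term $(R_{\rm eff})_{ij}P_j\ln(P_i/\pi_i)$ is positive only if $P_i<\pi_i$. Making $P_i/\pi_i$ the largest ratio forces $P_i>\pi_i$, because the $\pi$-weighted mean of the ratios $P_k/\pi_k$ equals $1$; it also contradicts $P_j\approx 1$. The configuration with the right sign is $P_i\to0$ with $(R_{\rm eff}P)_i<0$. But for $N\ge3$ and $P_j\to1$, the remaining terms $(R_{\rm eff}P)_k\ln(P_k/\pi_k)$ generally diverge to $-\infty$ as well. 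Moreover, whether such a state lies in the reachable set $\Lambda(t_0,0)\Delta$, with $\Delta$ the simplex, is exactly what the hypothesis does not control. In the second route there is no first-order term at all: since $R_{\rm eff}\pi=0$,
\[
\frac{dF}{dt}=\epsilon^2\sum_i\frac{v_i\,(R_{\rm eff}v)_i}{\pi_i}+O(\epsilon^3)\qquad\text{at }P=\pi+\epsilon v,
\]
and a single negative off-diagonal entry does not make this quadratic form positive for any $v$. Take uniform $\pi$ on three states and symmetric $R_{\rm eff}$ with $(R_{\rm eff})_{12}=-\delta$ and $(R_{\rm eff})_{13}=(R_{\rm eff})_{23}=b>2\delta>0$. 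Then $\sum_i v_i(R_{\rm eff}v)_i=\delta(v_1-v_2)^2-b[(v_1-v_3)^2+(v_2-v_3)^2]<0$ for every nonzero $v$ with $\sum_i v_i=0$.

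Running such a generator after a strongly mixing Markov stage, which brings every state close to $\pi$, gives an invertible stochastic evolution with a negative rate on an interval and $dF/dt\le0$ for every initial condition. So the last clause of the proposition cannot be derived from the hypothesis for a general $R_{\rm eff}$. It can only mean that a positive entropy rate is not excluded and occurs in examples. The paper's sketch asserts nothing stronger either; it says only that the negative-rate contribution ``can have either sign''.

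Your fallback is therefore the actual proof and should be the main argument, and it works. In the two-state model, $P(t)=\pi+g(t)(P(0)-\pi)$, where $\ddot g+\gamma\dot g+\gamma\lambda g=0$, $g(0)=1$, $\dot g(0)=0$ and $\lambda=k_{01}+k_{10}$. The effective generator is $R_{\rm eff}(t)=-\dot g(t)R/(\lambda g(t))$. Since $F$ is strictly convex in $g$ with its minimum at $g=0$, $dF/dt>0$ holds exactly when $g\dot g>0$. That is exactly when the off-diagonal entries of $R_{\rm eff}$ are negative. For $\gamma<4\lambda$, the oscillation of $g$ produces such intervals for every $P(0)\neq\pi$. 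Choose $k_{01}=k_{10}$, so that $|g|\le1$ keeps $P(t)$ in the simplex, or take $P(0)$ close to $\pi$. This matters because for non-uniform $\pi$ the exponential-kernel equation need not preserve positivity.
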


\begin{proof}[Proof sketch]
If $R_{\rm eff}(t)$ is a legitimate Markov generator for every intermediate
time, then the propagator between any two times is a stochastic map.  The
relative entropy is contractive under stochastic maps sharing the same stationary
state, and the standard entropy-dissipation argument applies.  A negative
off-diagonal element of $R_{\rm eff}(t)$ violates the generator condition for an
intermediate Markov step.  The evolution may still preserve positivity at the
level of the full memory dynamics, but the reduced observable dynamics is no
longer divisible into Markovian pieces.  Hence the contraction proof breaks down
on that interval.  Because the derivative of $F(t)$ is given by
Eq.~\eqref{eq:dF_generator}, the non-Markovian contribution generated by the
negative effective rate can have either sign, and for appropriate initial
directions it gives a positive entropy-rate interval.
\end{proof}

The proposition should not be read as a universal necessary-and-sufficient
criterion.  It identifies the mechanism that is relevant for the models studied
below.  Negative effective rates indicate loss of divisibility; once divisibility
is lost, the contraction property of relative entropy with respect to
intermediate stochastic maps no longer applies.  The KL divergence can then
temporarily increase, depending on the initial condition and the direction of the
memory-induced feedback.

This is the sense in which entropy overshoot provides a classical stochastic
analog of information backflow.  Information that had been effectively stored in
hidden memory variables can return to the observed degrees of freedom, producing
a temporary increase of distinguishability from the stationary state.

\section{A timescale criterion}

For exponential memory kernels the appearance of overshoot is governed mainly by
the competition between the intrinsic relaxation time and the memory time.  Let
$\lambda$ denote a characteristic relaxation rate of the Markov generator $R$,
for example the spectral gap in the reversible case.  The intrinsic relaxation
time is then $\lambda^{-1}$, while the memory time is $\gamma^{-1}$.

\begin{criterion}[Timescale criterion]
For the exponential kernel $K(t)=\gamma e^{-\gamma t}R$, entropy overshoot is
expected when the memory time is comparable to or longer than the intrinsic
relaxation time, i.e.
\begin{equation}
  \gamma \lesssim \lambda.
  \label{eq:criterion}
\end{equation}
\end{criterion}

\noindent\textit{Physical argument.}
If $\gamma\gg\lambda$, memory decays rapidly compared with the relaxation of the
system.  The dynamics is then close to an ordinary Markov process and the KL
divergence decreases monotonically.  If $\gamma\lesssim\lambda$, the memory
variable retains information on the timescale over which the observed system is
already relaxing.  Delayed feedback from this memory can drive the observable
probability vector away from the stationary distribution before final relaxation
sets in.  This produces entropy overshoot.  The numerical phase diagrams below
show that the boundary between the two regimes is well captured by this
timescale estimate.

The criterion is not meant to replace a model-specific stability analysis.  Its
purpose is to give a compact physical explanation of why overshoot is enhanced
by long memory.

\section{Minimal two-state model}

We first consider a two-state system with states $0$ and $1$.  The Markov
generator is
\begin{equation}
  R=\begin{pmatrix}
  -k_{01} & k_{10}\\
   k_{01} & -k_{10}
  \end{pmatrix}.
\end{equation}
The stationary distribution is
\begin{equation}
  \pi=\left(\frac{k_{10}}{k_{01}+k_{10}},
  \frac{k_{01}}{k_{01}+k_{10}}\right)^T.
\end{equation}
The Markovian relaxation time is
\begin{equation}
  \tau=(k_{01}+k_{10})^{-1}.
\end{equation}
For the Markovian equation $\dot P=RP$, the KL divergence decreases
monotonically.

We now introduce the exponential memory kernel
\begin{equation}
  K(t)=\gamma e^{-\gamma t}R.
\end{equation}
The resulting generalized master equation is
\begin{equation}
  \dot P(t)=\int_0^t \gamma e^{-\gamma(t-\tau)}RP(\tau)\,d\tau.
\end{equation}
When $\gamma$ is small compared with the Markovian relaxation rate, the memory
variable changes slowly and delayed feedback becomes significant.  The KL
divergence can then show a transient maximum before decaying to zero.

\begin{figure}[t]
\centering
\includegraphics[width=0.72\linewidth]{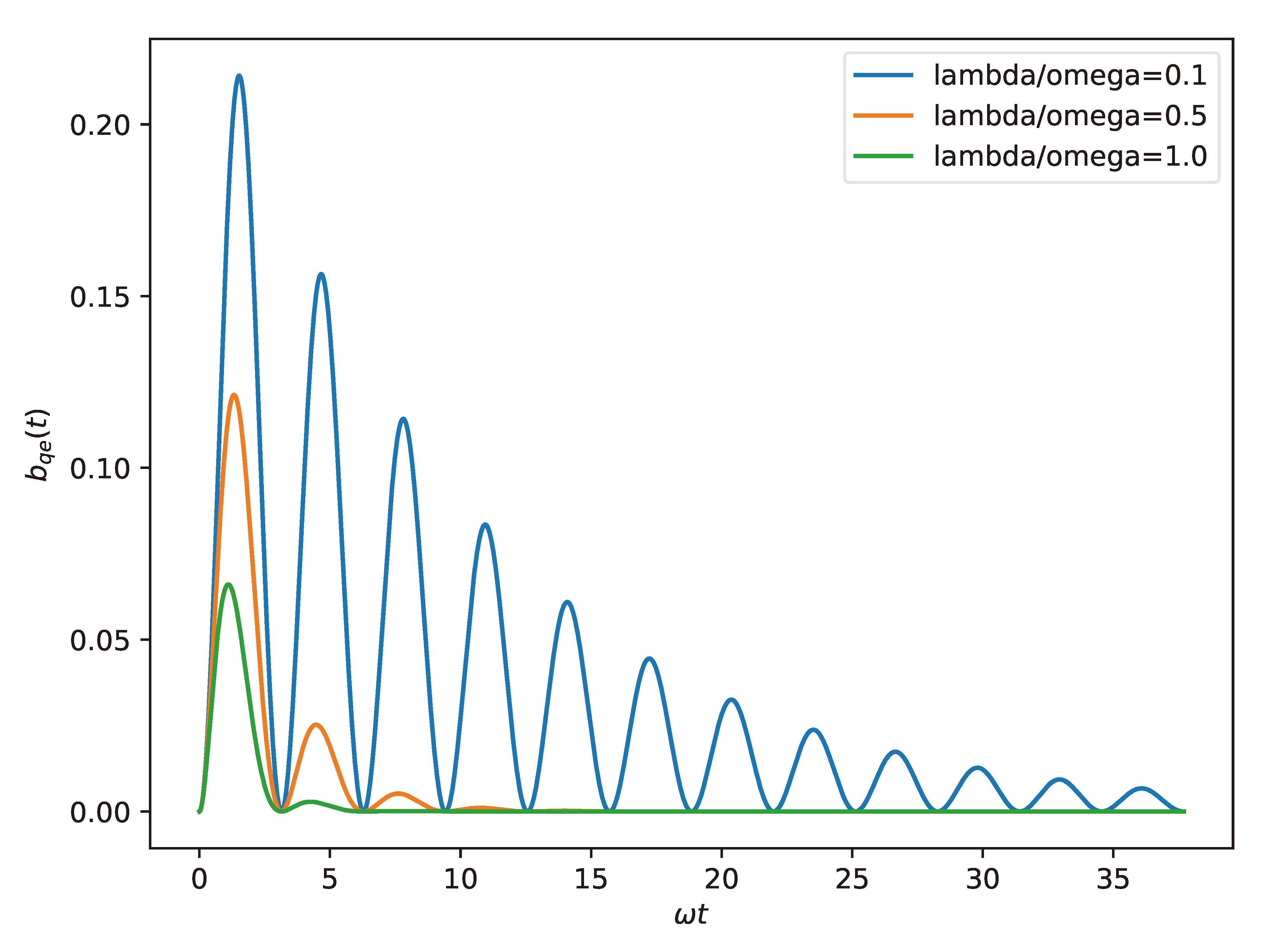}
\caption{Example of entropy overshoot in a minimal non-Markovian two-state
model.  The plotted quantity is a non-negative backflow/overshoot indicator
constructed from the positive part of the entropy-rate reversal.  Smaller
values of the memory parameter correspond to longer memory and stronger
oscillatory transient behavior.}
\label{fig:two_state}
\end{figure}

Figure~\ref{fig:two_state} illustrates the typical behavior.  The purpose of the
example is not to claim universality of the two-state model, but to show the
basic feedback mechanism in the simplest possible setting.

\section{Three-state model}

To obtain a richer phase structure, consider the symmetric three-state generator
\begin{equation}
  R=\begin{pmatrix}
  -2a & a & a\\
  a & -2a & a\\
  a & a & -2a
  \end{pmatrix}.
  \label{eq:three_state_R}
\end{equation}
The stationary distribution is uniform,
\begin{equation}
  \pi_i=\frac{1}{3}.
\end{equation}
With the exponential memory kernel
\begin{equation}
  K(t)=\gamma e^{-\gamma t}R,
\end{equation}
the system exhibits both monotonic and non-monotonic relaxation depending on the
ratio between $\gamma$ and $a$.

The use of a symmetric model is mainly for clarity.  The overshoot mechanism does
not require symmetry.  For a generic irreducible three-state generator with a
non-uniform stationary distribution, numerical integration gives the same
qualitative conclusion: sufficiently long memory can produce transient entropy
increase.  The detailed trajectory depends on the rates, but the controlling
factor remains the comparison between the memory time and the relaxation time.

\section{Phase diagrams}

To characterize the transition between relaxation regimes, we define the
overshoot amplitude
\begin{equation}
  \Delta F=\max_{t\geq0}F(t)-F(0).
  \label{eq:DeltaF}
\end{equation}
The monotonic regime has $\Delta F\leq0$, while entropy overshoot corresponds to
$\Delta F>0$.  In numerical plots we use the positive part of this quantity as
the displayed overshoot amplitude.

We also define a relaxation time $t_r$ by
\begin{equation}
  t_r=\min\{t:F(t)<\epsilon\},
\end{equation}
for a small threshold $\epsilon$.  Sweeping the parameters $(\gamma,a)$ produces
a phase diagram separating monotonic relaxation from entropy overshoot.

\begin{figure}[t]
\centering
\includegraphics[width=0.70\linewidth]{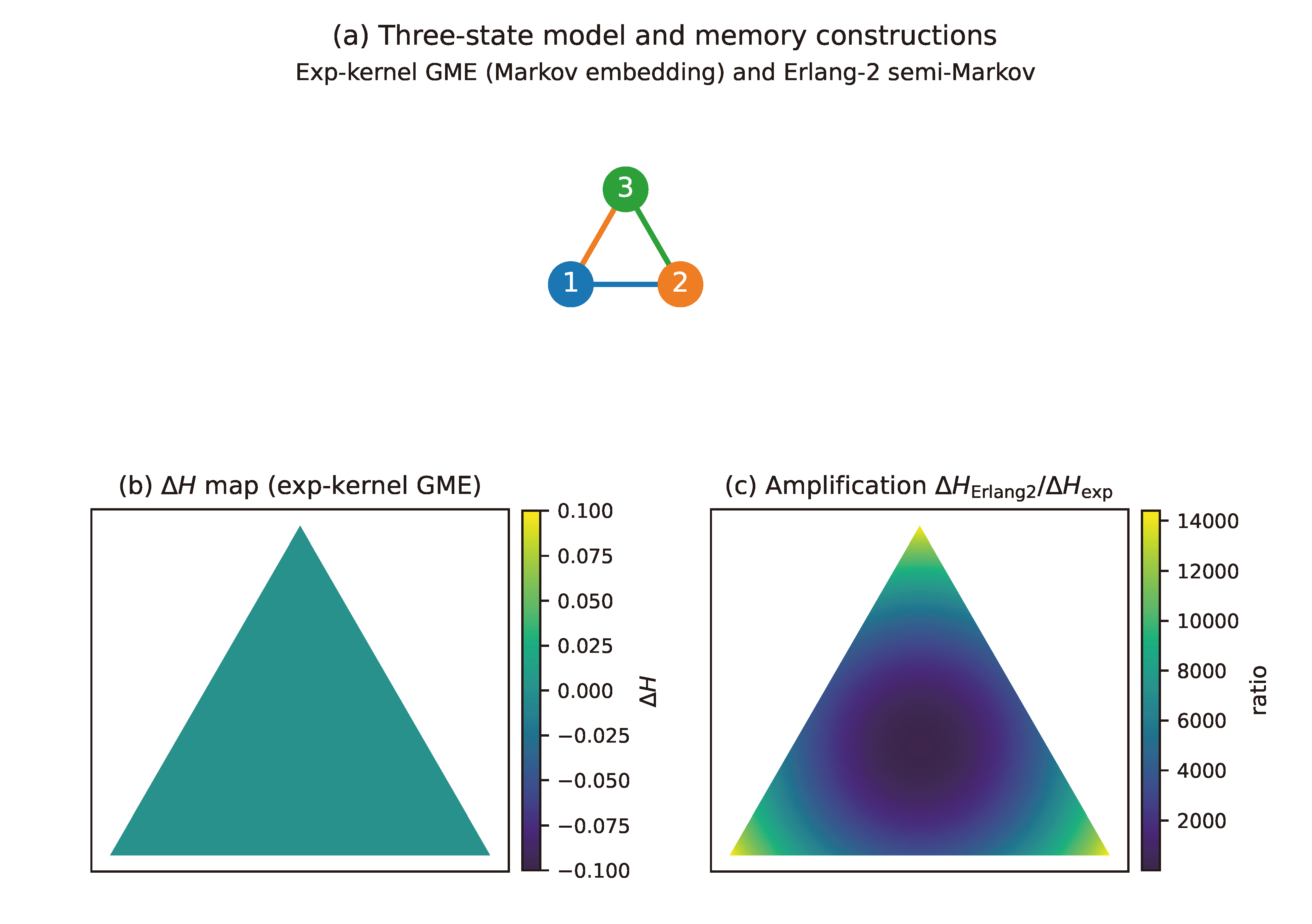}
\caption{Divisibility-loss phase diagram for the three-state model.  The
color scale indicates the strength of transient entropy amplification or an
associated amplification ratio.  The boundary separates a divisible relaxation
regime, where the KL divergence is monotonic, from a non-divisible regime where
entropy overshoot becomes possible.  The position of the boundary is controlled
primarily by the ratio between the memory time and the intrinsic relaxation
time.}
\label{fig:phase}
\end{figure}

Figure~\ref{fig:phase} summarizes the qualitative behavior.  The overshoot region
expands as the memory time increases.  This supports the timescale criterion
\eqref{eq:criterion} and shows that entropy overshoot is not a fine-tuned effect
of a single trajectory.

\section{Numerical method}

The generalized master equation was solved by two complementary methods.  First,
we directly integrated the integro-differential equation using a fourth-order
Runge--Kutta scheme combined with a trapezoidal discretization of the memory
integral.  The time step was decreased until the resulting KL-divergence curves
were insensitive to further refinement.

Second, for the exponential memory kernel, we used the embedded time-local
system~\eqref{eq:embed_P}--\eqref{eq:embed_Y}.  The enlarged system was integrated
directly and then projected onto the observed probability vector $P(t)$.  The
trajectories obtained by the two methods agreed within numerical accuracy.

For the phase diagrams, the parameter region was scanned over representative
ranges of $\gamma$ and $a$.  For each parameter pair the KL divergence trajectory
was computed and the overshoot amplitude~\eqref{eq:DeltaF} was evaluated.

\section{Results}

The numerical results support the structural mechanism described above.  When
memory is short, the dynamics closely follows ordinary Markovian relaxation and
the KL divergence decays monotonically.  When memory is long, delayed feedback
from the hidden memory variable can reverse the initial relaxation trend and
produce entropy overshoot.

In the two-state model, the overshoot is visible as an oscillatory transient in
the entropy-related backflow indicator.  The amplitude of this transient grows as
the memory time increases.  In the three-state model, parameter scans show a
clear separation between monotonic and overshoot regimes.  The observed boundary
is consistent with the estimate $\gamma\sim\lambda$.

These results indicate that entropy overshoot is a robust consequence of
memory-induced non-divisibility rather than an artifact of a special initial
condition.  The minimal models provide simple examples, while the mechanism
itself is formulated in terms of the effective generator and therefore applies
more broadly.

\section{Discussion}

The present work gives a classical stochastic interpretation of information
backflow.  In non-Markovian open quantum systems, information backflow is often
associated with loss of divisibility of the dynamical map
\cite{Breuer2009,Rivas2014}.  The same structural idea appears here in a
finite-state classical setting: memory can make the effective time-local
generator non-divisible, and this allows a temporary increase of the KL
divergence from the stationary distribution.

It is important to distinguish this conclusion from previous uses of
relative-entropy revivals or negative entropy-production rates as witnesses of
non-Markovianity.  Earlier work clarified classical versus quantum notions of
non-Markovianity \cite{Vacchini2012} and showed that negative entropy-production
rates can be associated with memory effects in several open-system settings
\cite{Bhattacharya2017,Popovic2018,Strasberg2019}.  The present paper does not
claim that relative-entropy revivals themselves are new.  Its contribution is to
identify a minimal finite-state mechanism that produces such revivals: a memory
kernel generates transient negative effective rates, these rates signal loss of
stochastic divisibility, and the relative-entropy contraction argument is then
locally inapplicable.

This interpretation also explains why the phenomenon is absent in ordinary
reversible Markov chains.  In that case the relative entropy is contractive and
serves as a Lyapunov function.  Memory changes the situation not by changing the
stationary distribution, but by changing the transient direction of the observed
flow in probability space.  Related non-Markovian stochastic formulations have
also been investigated in alternative approaches such as stochastic wave-function
methods and Gaussian non-Markovian processes \cite{Budini2004}.

The main limitation of the present work is that the timescale criterion is a
structural and diagnostic criterion rather than a universal necessary-and-
sufficient condition.  Different kernels and different state-space structures
can modify the detailed form of the phase boundary.  Nevertheless, the minimal
models show that the basic mechanism is generic: entropy overshoot occurs when
memory persists long enough to feed information back into the observed degrees of
freedom.

This viewpoint suggests several extensions.  One may consider continuous-state
systems, non-exponential kernels, semi-Markov waiting-time distributions, or
quantum stochastic dynamics.  Another direction is to compare different
information measures and identify which transient features are metric-dependent
and which are controlled by the divisibility structure alone.

\section{Conclusion}

We have analyzed entropy overshoot in finite-state non-Markovian stochastic
relaxation.  The phenomenon is defined as a transient increase of the KL
divergence from the stationary distribution during a relaxation process.

The main contribution of this work is not the observation of relative-entropy
revival itself, which has been discussed previously in connection with
non-Markovianity and entropy-production rates.  Rather, the contribution is to
identify memory-induced loss of divisibility as a minimal structural mechanism
underlying entropy overshoot.

When the effective time-local generator remains Markovian, the usual
monotonicity of relative entropy is retained.  When memory produces transient
negative effective rates, the reduced observable dynamics ceases to be divisible
into intermediate Markov steps.  The contraction proof for the KL divergence is
then no longer available, and entropy overshoot can occur.

Minimal two- and three-state models demonstrate the mechanism explicitly and
show that the transition between monotonic and non-monotonic relaxation is
controlled by the ratio between the memory time and the intrinsic relaxation
time.  The resulting divisibility-loss phase diagrams provide a compact
representation of the overshoot regime.

The results offer a simple theoretical framework for classical information
backflow in stochastic systems with memory and provide a basis for further study
of non-Markovian relaxation in more complex settings.


\begin{thebibliography}{99}

\bibitem{Nakajima1958}
S. Nakajima, On quantum theory of transport phenomena, Prog. Theor. Phys. 20,
948 (1958).

\bibitem{Zwanzig2001}
R. Zwanzig, \emph{Nonequilibrium Statistical Mechanics} (Oxford University
Press, 2001).

\bibitem{Kubo1966}
R. Kubo, The fluctuation-dissipation theorem, Rep. Prog. Phys. 29, 255 (1966).

\bibitem{Gardiner2004}
C. Gardiner, \emph{Handbook of Stochastic Methods} (Springer, 2004).

\bibitem{BreuerPetruccione2002}
H.-P. Breuer and F. Petruccione, \emph{The Theory of Open Quantum Systems}
(Oxford University Press, 2002).

\bibitem{Breuer2009}
H.-P. Breuer, E.-M. Laine, and J. Piilo, Measure for the degree of
non-Markovian behavior of quantum processes in open systems, Phys. Rev. Lett.
103, 210401 (2009).

\bibitem{Rivas2014}
A. Rivas, S. F. Huelga, and M. B. Plenio, Quantum non-Markovianity:
characterization, quantification and detection, Rep. Prog. Phys. 77, 094001
(2014).

\bibitem{Vacchini2012}
B. Vacchini, A classical appraisal of quantum definitions of non-Markovian
dynamics, J. Phys. B: At. Mol. Opt. Phys. 45, 154007 (2012).

\bibitem{Bhattacharya2017}
S. Bhattacharya, A. Misra, C. Mukhopadhyay, and A. K. Pati, Exact master
equation for a spin interacting with a spin bath: Non-Markovianity and negative
entropy production rate, Phys. Rev. A 95, 012122 (2017).

\bibitem{Popovic2018}
M. Popovic, B. Vacchini, and S. Campbell, Entropy production and correlations
in a controlled non-Markovian setting, Phys. Rev. A 98, 012130 (2018).

\bibitem{Strasberg2019}
P. Strasberg and M. Esposito, Non-Markovianity and negative entropy production
rates, Phys. Rev. E 99, 012120 (2019).

\bibitem{Magnetization2026}
F. Hartmann, F. Tietjen, R. M. Geilhufe, and J. Anders, Quantifying
non-Markovianity in magnetization dynamics via entropy production rates,
arXiv:2602.17384 (2026).

\bibitem{CoverThomas2006}
T. Cover and J. Thomas, \emph{Elements of Information Theory} (Wiley, 2006).

\bibitem{Qian2001}
H. Qian, Relative entropy: free energy associated with equilibrium fluctuations
and nonequilibrium deviations, J. Phys. Chem. B 105, 10733 (2001).

\bibitem{Seifert2012}
U. Seifert, Stochastic thermodynamics, fluctuation theorems and molecular
machines, Rep. Prog. Phys. 75, 126001 (2012).

\bibitem{VanKampen2007}
N. G. van Kampen, \emph{Stochastic Processes in Physics and Chemistry}
(North-Holland, 2007).

\bibitem{HanggiJung1995}
P. H{\"a}nggi and P. Jung, Colored noise in dynamical systems, Adv. Chem. Phys.
89, 239 (1995).

\bibitem{MetzlerKlafter2000}
R. Metzler and J. Klafter, The random walk's guide to anomalous diffusion: a
fractional dynamics approach, Phys. Rep. 339, 1 (2000).

\bibitem{MontrollWeiss1965}
E. Montroll and G. Weiss, Random walks on lattices. II, J. Math. Phys. 6, 167
(1965).

\bibitem{Budini2004}
A. A. Budini, Non-Markovian Gaussian dissipative stochastic wave vector, Phys.
Rev. A 69, 042107 (2004).

\end{thebibliography}
\end{document}